\newtheorem{thm}{Theorem}[section]
\newtheorem{cor}[thm]{Corollary}
\newtheorem{lem}[thm]{Lemma}
\newtheorem{defn}[thm]{Definition}
\def\P{\mathbb{P}}
\def\fB{\mathfrak{B}}
\def\fM{\mathfrak{M}}
\def\fS{\mathfrak{S}}
\def\SU{\mathrm{SU}}
\def\GL{\mathrm{GL}}
\def\R{\mathbb{R}}
\def\C{\mathbb{C}}
\def\Z{\mathbb{Z}}
\def\SL{\mathrm{SL}}
\begin{document}

\title[$2$-Spinors]{$2$-Spinors via Linear Algebra}

\author[R. Plymen]{Roger Plymen}
\address{School of Mathematics, Southampton University, Southampton SO17 1BJ,  England
\emph{and} School of Mathematics, Manchester University, Manchester M13 9PL, England}
\email{r.j.plymen@soton.ac.uk \quad roger.j.plymen@manchester.ac.uk}

%\keywords{Linear algebra, 2-spinors}
%\classno{20G05, 22E50}
%\date{\today}
\maketitle

\begin{abstract}   We give a streamlined account of $2$-spinors, up to and including the Dirac equation, using little more than the resources of linear algebra.   
We prove that the Dirac bundle is isomorphic to the associated bundles $\SL_2(\C) \times_{\SU_2} S$ and $\SL_2(\C) \times_{\SU_2} \overline{S}$.  
A solution of the Dirac equation  determines a pair of conjugate $2$-spinor fields over the mass shell $X_m$.   
\end{abstract} 

\tableofcontents

\section{Introduction} The classic work on spinors and space-time is by Penrose and Rindler \cite{PR}.   In this magisterial work, the authors demonstrate that 
$2$-spinors are woven into the very fabric of space-time.   This point is further made in the books by O'Donnell \cite{O} and Wald [Chapter 13]\cite{W}.

It is our aim in this Note to show that the basic theory of $2$-spinors, up
to and including the Dirac equation, can be formulated using little more than the resources of linear algebra.    

Throughout this Note, we define the vector space of $4$-spinors as
\[
\fS = S \oplus \overline{S}
\]
where $S$ is the vector space of $2$-spinors and $\overline{S}$ is the conjugate vector space of $S$.

Our approach is canonical except at the moment in \S 2.2  when one has to identify a real finite-dimensional vector space $V$ with its space of characters.   
 
 The Dirac bundle is defined and studied in some detail in the the books by Simms \cite{Sim} and  Varadarajan \cite{Var}.   
 Our new result is that the Dirac bundle  as an $\SL_2(\C)$-bundle is isomorphic to the associated bundles
\begin{eqnarray}\label{S1}
\SL_2(\C) \times_{\SU_2} S
\end{eqnarray}
and
\begin{eqnarray}\label{S2}
\SL_2(\C) \times_{\SU_2} \overline{S}
\end{eqnarray}
where $\SU_2$ acts on $S$ and $\overline{S}$ as the spin $1/2$ representation.   As a consequence,  a solution of the 
Dirac equation determines a pair of conjugate $2$-spinor fields over the mass shell $X_m$.

 In section 2, we present the basic theory of $2$-spinors via linear algebra.    In section 3, we define the Dirac bundle and prove that it is isomorphic to the associated vector bundles (\ref{S1})
 and (\ref{S2}).       In section 4, we briefly discuss the transition from the vector space of $2$-spinors to spinor fields on space-time.

\section{$2$-spinors}\label{SS}   Let $W$ be a complex vector space.   Then $W$ comprises an abelian group $A$ and a scalar-multiplication map 
\begin{eqnarray}\label{m1}
m : (\C,A) \to A, \quad \quad ( \lambda, v) \mapsto  \lambda v
\end{eqnarray}

Each complex vector space $W$ has a companion vector space,  the \emph{conjugate vector space} denoted $\overline{W}$.   Now $W$ and $\overline{W}$ share the same
underlying abelian group $A$, but the scalar-multiplication map for $\overline{W}$ is given by
\begin{eqnarray}\label{m2}
\overline{m} : (\C,A) \to A, \quad \quad (\lambda,v) \mapsto \overline{\lambda}v
\end{eqnarray}

Let $v \in W$.   This means that $v \in A$ equipped with the scalar-multiplication (\ref{m1}).  

The vector spaces $W$ and $\overline{W}$ enjoy a perfect duality, in the sense that each is the conjugate of the other.   

 If we wish to refer to $v \in A$ equipped with the scalar-multiplication (\ref{m2}) we shall write
$\overline{v}$ instead of $v$.   From a logical point of view, we have $v = \overline{v} \in A$ but $v$ and $\overline{v}$ lie in the distinct vector spaces $W$ and $\overline{W}$.   
From a practical point of view, it is good to adopt the following convention:   we shall always write 
\[
v \in W \quad \textrm{and} \quad \overline {v} \in \overline{W}
\]
In that case, in accordance with (\ref{m2}), we have
\[
\overline{\lambda v} = \overline{\lambda} \overline{v}
\]

If $W$ is finite-dimensional with basis $\{e_1, \ldots, e_n\}$ then $\overline{W}$ admits the basis$\{\overline{e_1}, \ldots, \overline{e_n})$ so that 
$W$ and $\overline{W}$ have the same dimension.    

Given $A \in \GL(W)$ define the \emph{conjugate} of $A$ as follows:
\[
\overline{A} \overline{v} : = \overline{Av}
\]
Then we have
\begin{eqnarray*}
\overline{A}(\lambda \overline{v} + \mu \overline{w}) &=&  \overline{A(\overline{\lambda}v + \overline{\mu}w})\\
&=& \overline{\overline{\lambda} Av + \overline{\mu}Aw}\\
&=& \lambda \overline{Av} + \mu \overline{Aw}\\
&=& \lambda  \overline{A}\overline{v} + \mu \overline{A} \overline{w}
\end{eqnarray*}
so that $\overline{A}$ is a \emph{linear} map $\overline{W} \to \overline{W}$.   We have $\overline{A} \in \GL(\overline{W})$.   
The matrix of the conjugate of $A$  is the conjugate of the matrix of $A$.

  Now let $S$ denote a complex vector space of dimension $2$.   Consider the tensor product over $\C$ 
\[
S \otimes \overline{S}
\]
This is a complex vector space of dimension $4$.   Note the standard tensor product rule:
\[
\lambda(x \otimes \overline{y}) = (\lambda x) \otimes \overline{y} = x \otimes \overline{\lambda} \overline{y}
\]

This vector space admits a canonical involution $J$ defined on elementary tensors as follows:
\[
J : S \otimes \overline{S} \to S \otimes \overline{S}, \quad \quad x \otimes \overline{y} \mapsto y \otimes \overline{x}
\]

We have $J^2 = I$ the identity map on $S \otimes \overline{S}$.   
The map $J$ admits two eigenvalues, namely $+1$ and $-1$.   

The $+1$-eigenspace of $J$ will be denoted $V$.  We define
\[
V: = \{v \in S \otimes \overline{S} : Jv = v\}
\]
Then $V$ is a canonical subspace of $S \otimes \overline{S}$.   We will view $V$ as a real vector space of dimension $4$.

Let $\bigwedge^2S$ denote the exterior square of $S$.   Choose a basis $\{e_1, e_2\}$ for $S$.   Then the $2$-vector $e_1 \wedge e_2$ is a basis for the $1$-dimensional vector space 
$\bigwedge^2S$.    Define an isomorphism 
$\varphi : \bigwedge^2S \simeq \C$ as follows:
\[
\varphi(\lambda e_1\wedge e_2) = \lambda
\]
Next, define
\[
\varepsilon(x,y): = \varphi(x \wedge y)
\]
Then we have
\begin{eqnarray*}
\varepsilon(x,y) &=& \varphi(x \wedge y)\\
 &=& \varphi(- y \wedge x)\\
 &=& - \varphi(y \wedge x)\\
 &=& -\varepsilon(x,y)
\end{eqnarray*}
and also
\[
\varepsilon(e_1,e_2) = \varphi(e_1 \wedge e_2) = 1
\]
so that $\varepsilon$ is a symplectic form on $S$ and $\{e_1, e_2\}$ is a symplectic basis, a \emph{dyad}.   

We emphasize that the symplectic form $\varepsilon$  \emph{arises from, and is determined by, a non-canonical choice of isomorphism}  
\[
\varphi : {\bigwedge}^2 S \simeq \C
\]

Define
\[
\SL(S): = \{ A \in \GL(S) : \det (A) = 1 \}.
\]
Once a basis in $S$ has been chosen, we have 
\[
\SL(S) \simeq \SL_2(\C)
\]
 Let $A \in \SL(S)$.      If $Ae_1 =  \alpha_{11}e_1 + \alpha_{12}e_2$ and $ Ae_2 =  \alpha_{21}e_1 + \alpha_{22}e_2$ then 
\begin{eqnarray*}
\varepsilon(Ae_1,Ae_2) &=& \varphi(Ae_1 \wedge Ae_2)  \\
&=&  \varphi(\det (A) e_1 \wedge e_2)  \\
&=& \varphi(e_1 \wedge e_2) \\
&=& \varepsilon(e_1, e_2)\\
&=& 1
\end{eqnarray*}
so that $\SL(S)$ acts freely and transitively on the set of all symplectic bases (dyads).

 The basis $\{\overline{e_1}, \overline{e_2}\}$ determines an isomorphism $\overline{\varphi} : \bigwedge^2 \overline{S} \simeq \C$ as follows:
\[
\overline{\varphi}(\lambda \overline{e_1} \wedge \overline{e_2}) = \lambda
\]
and allows us to define
\[
\overline{\varepsilon}(\overline{x}, \overline{y}): = \overline{\varphi}(\overline{x} \wedge \overline{y})
\]
as a symplectic form on $\overline{S}$.  

We have
\[
{\bigwedge}^2 \overline{S} = \overline{{\bigwedge}^2 S}
\]
and so
\[
\overline{\varepsilon}(\overline{x}, \overline{y}) = \overline{\varepsilon(x,y)}
\]

   We have 
\begin{eqnarray*}
   \overline{A} \overline{e_1} &=& \overline{Ae_1} \\
   &=& \overline{a_{11}e_1 + a_{12}e_2} \\
   &=& \overline{a_{11}} \; \overline{e_1} + \overline{a_{12}} \; \overline{e_2}
\end{eqnarray*}
and
\begin{eqnarray*}
   \overline{A} \overline{e_2} &=& \overline{Ae_2} \\
   &=& \overline{a_{21}e_1 + a_{22}e_2} \\
   &=& \overline{a_{21}} \; \overline{e_1} + \overline{a_{22}} \; \overline{e_2}
\end{eqnarray*}
so that
\[
\det(\overline{A}) = \overline{\det(A)} = 1
\]
and
\[
 \overline{A} \in \SL(\overline{S})
 \]

Then
\[
\overline{\varepsilon}(\overline{A} \overline{e_1}, \overline{A} \overline{e_2}) = 1
\]
as above and $\SL(\overline{S})$ acts freely and transitively on the set of dyads for $\overline{S}$.

\begin{defn}   We now define
\begin{eqnarray}\label{h}
h(a \otimes \overline{b}, c \otimes \overline{d}): &=& \varepsilon(a,c) \cdot \overline{\varepsilon}(\overline{b}, \overline{d})
\end{eqnarray}
\end{defn} 

This determines a bilinear form on $S \otimes \overline{S}$, thanks to the bilinearity of $\varepsilon$ and $\overline{\varepsilon}$.  
Furthermore, this is a \emph{symmetric} bilinear form, for $h$ is invariant under the  simultaneous interchanges $a \to c$ and $\overline{b} \to \overline{d}$.

We are especially interested in the restriction of $h$ to $V$.  We denote this restriction by $g$:
\[
g = h|_V
\] 
This will be a symmetric bilinear form on $V$, and so will have an associated quadratic form $Q$.    We next determine the rank and signature of this quadratic form.

Conceptually, $h$ is given by the following map:
\begin{eqnarray*}
S \otimes \overline{S} \otimes S \otimes \overline{S}  \cong S  \otimes S \otimes \overline{S} \otimes \overline{S} \to {\bigwedge}^2 S \otimes {\bigwedge}^2 \overline{S} \cong \C \otimes \C \cong \C
\end{eqnarray*}
Note that this map \emph{depends only on} $\varepsilon$.

\begin{thm} Let $x \in S \otimes \overline{S}$.    If $Jx = x$ then $g$, defined as 
\[
g(x): = h(x,x)
\]
is a quadratic form of rank $4$ and signature $2$, i.e. a Lorentz quadratic form.
\end{thm}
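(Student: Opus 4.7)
The plan is to pick a dyad, translate $S \otimes \overline{S}$ into $2\times 2$ matrices, identify $V$ with the Hermitian matrices, and reduce the statement to the classical fact that the determinant on $2 \times 2$ Hermitian matrices is the Minkowski quadratic form.

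Choose a dyad $\{e_1, e_2\}$ and use the resulting basis $\{e_i \otimes \overline{e_j}\}$ to identify $S \otimes \overline{S}$ with $M_2(\C)$ by sending $e_i \otimes \overline{e_j}$ to the matrix unit $E_{ij}$. The first step is to verify that $J$ is \emph{antilinear}, which follows directly from $J(x \otimes \overline{y}) = y \otimes \overline{x}$ combined with the convention $\overline{\lambda v} = \overline{\lambda}\,\overline{v}$ established in \S\ref{SS}. A short matrix computation then shows that under this identification $J$ becomes Hermitian conjugation $A \mapsto A^{\ast}$, so that
\[
V \;=\; \{\,A \in M_2(\C) : A^{\ast} = A\,\},
\]
the real $4$-dimensional space of Hermitian matrices; I would parameterize it in the usual way by
\[
A \;=\; \begin{pmatrix} t+z & x+iy \\ x-iy & t-z \end{pmatrix}, \qquad t,x,y,z \in \R.
\]

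Next I would expand $h$ in the basis $\{E_{ij}\}$ using the definition (\ref{h}). Since $\varepsilon(e_i, e_k)$ is non-zero only for $\{i,k\} = \{1,2\}$, only four terms survive, and one reads off
\[
h(A, A) \;=\; 2\bigl(a_{11}a_{22} - a_{12}a_{21}\bigr) \;=\; 2 \det A.
\]
Restricting to $V$ and substituting the parameterization above gives
\[
g(A) \;=\; 2\bigl(t^2 - x^2 - y^2 - z^2\bigr),
\]
which is non-degenerate of signature $(1,3)$, i.e.\ Lorentz of rank $4$ and signature $2$.

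The only real subtlety is the identification of $J$ with Hermitian conjugation, as it requires careful bookkeeping of the antilinearity of $J$ and the conjugate scalar action on $\overline{S}$. Once that is settled, recognizing $g$ as twice the determinant of a Hermitian matrix---the classical Minkowski form---finishes the proof essentially for free.
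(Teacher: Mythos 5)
Your proof is correct, and it takes a genuinely different (more conceptual) route than the paper's. The paper proceeds by explicit diagonalization: it picks the four specific vectors $u_0, u_1, u_2, u_3$ of (\ref{world}) (essentially the normalized Pauli matrices in disguise, following \cite[3.1.20]{PR}), verifies that each lies in $V$, and computes the Gram matrix of $g$ on this basis directly from the definition (\ref{h}), obtaining the diagonal $(+1,-1,-1,-1)$. You instead make the identification $S \otimes \overline{S} \cong M_2(\C)$ via $e_i \otimes \overline{e_j} \mapsto E_{ij}$ explicit, observe (correctly) that $J$ is antilinear and corresponds to the conjugate transpose $A \mapsto A^{*}$, so that $V$ is the real $4$-space of Hermitian matrices, and then compute $h(A,A) = 2\det A$ from the definition, reducing everything to the classical fact that $\det$ on Hermitian $2\times 2$ matrices is the Minkowski form. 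Your key intermediate claim $h(A,A) = 2\det A$ checks out: the only surviving terms in $\sum a_{ij}a_{kl}\,\varepsilon(e_i,e_k)\,\overline{\varepsilon}(\overline{e_j},\overline{e_l})$ are the four with $\{i,k\}=\{j,l\}=\{1,2\}$, giving $2(a_{11}a_{22}-a_{12}a_{21})$. The factor of $2$ compared to the paper's diagonal $\pm 1$ is just a different choice of basis for $V$ and is harmless for rank and signature. What your approach buys is conceptual clarity --- it explains \emph{why} the Lorentz form appears, rather than merely exhibiting an orthogonal basis --- and it also makes transparent the familiar connection to Pauli matrices and the classical $\SL_2(\C) \to \SO(1,3)$ story. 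What the paper's approach buys is that it stays entirely within the abstract $S \otimes \overline{S}$ framework, avoids committing to a matrix realization, and gives the orthonormal-style basis $\{u_j\}$ that is reused later in \S 2.2 to build the isomorphisms $V \to \R^4$ and $\widehat{V} \to \R^4$.
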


\begin{proof} We will do this via an explicit diagonalization of $g$.  
Let $\{e_1, e_2\}$ be a basis of $S$.  Then $e_i \otimes \overline{e_j}$ is a basis of $S \otimes \overline{S}$ with $1 \leq i,j \leq 2$.   We will  write
\[
e_i \cdot \overline{e_j}: = e_i \otimes \overline{e_j}
\]

Following \cite[3.1.20]{PR}, we define
\begin{eqnarray}\label{world}
u_0 &=& (e_1 \cdot \overline{e_1} + e_2 \cdot \overline{e_2})/\sqrt 2\\
u_1 &=& e_1 \cdot \overline{e_2} + e_2 \cdot \overline{e_1}/\sqrt 2 \nonumber\\
u_2 &=& i(e_1 \cdot \overline{e_2} - e_2 \cdot \overline{e_1})/\sqrt 2   \nonumber\\
u_3 &=& e_1 \cdot \overline{e_1} - e_2 \cdot \overline{e_2}/\sqrt 2   \nonumber
\end{eqnarray}

Then we have
\[
Ju_j = u_j
\]
for all $j$ and so $u_j \in V$ for all $j$.   We also have

\begin{eqnarray*}
g(u_0, u_0) &=& g(e_1 \cdot \overline{e_1} + e_2 \cdot \overline{e_2}, e_1 \cdot \overline{e_1} + e_2 \cdot \overline{e_2})\\
&=& \frac{\varepsilon(e_1,e_2)\overline{\varepsilon}(\overline{e_1}, \overline{e_2})}{2} + \frac{\varepsilon(e_2,e_1) \varepsilon(\overline{e_2}, \overline{e_1})}{2}\\
&=& \frac{1}{2} + \frac{1}{2}\\
&=& 1
\end{eqnarray*}

Similarly, we have 
\begin{eqnarray*}
g(u_j,u_j) &=& -1 \quad \quad 1 \leq j \leq 3\\
g(u_i,u_j) &=& 0  \quad \quad i\neq j, \: 0 \leq i,j \leq 3
\end{eqnarray*}

so that $g$ determines a quadratic form $Q$ of rank $4$ and signature $2$, which we will denote as follows:
\[ 
+ ---
\]
This is the Lorentz quadratic form.     
\end{proof}

It is a truly remarkable fact that the Lorentz quadratic form emerges from an apparently symmetrical situation. 

The space $S \oplus \overline{S}$ is a complex vector space of dimension $4$.   It is called the space of $4$-\emph{spinors}.
We will write
\[
\fS:  = S \oplus \overline{S}.
\]

\subsection{$4$-spinors as a Clifford module}  Thanks to Paul Robinson for help with this section.   

Let $a,b,c \in S$.   Since $S$ has dimension $2$, there must be a linear relation among $a,b,c$.   We make this precise.

\begin{lem}\cite[1.6.19]{PR}.  Let $a,b,c \in S$.   Then we have the cyclic identity
\begin{eqnarray}\label{cyclic}
\varepsilon(b,c)a + \varepsilon(c,a)b + \varepsilon(a,b)c = 0
\end{eqnarray}
\end{lem}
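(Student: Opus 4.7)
The plan is to show that the left-hand side of \eqref{cyclic}, viewed as a map
\[
E : S \times S \times S \to S, \qquad (a,b,c) \mapsto \varepsilon(b,c)a + \varepsilon(c,a)b + \varepsilon(a,b)c,
\]
is an \emph{alternating} trilinear map, and then invoke the fact that $\bigwedge^3 S = 0$ because $\dim_\C S = 2$.

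Trilinearity is immediate from the bilinearity of $\varepsilon$ (both in checking additivity in each slot and in pulling scalars out of each slot), so no real work is needed there. For the alternating property, I would check directly that $E(a,b,c)$ vanishes whenever two of its arguments coincide. For instance, substituting $b = a$ gives $\varepsilon(a,c)a + \varepsilon(c,a)a + \varepsilon(a,a)c$; the first two terms cancel by the antisymmetry of $\varepsilon$ and the last is zero since $\varepsilon(a,a)=0$. The cases $c = a$ and $c = b$ are analogous short calculations.

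Once $E$ is known to be alternating and trilinear, it factors through $\bigwedge^3 S$. Since $S$ is two-dimensional, $\bigwedge^3 S = 0$, so $E$ is identically zero, which is exactly the cyclic identity \eqref{cyclic}. Alternatively, one could verify the identity directly on a basis: because $E$ is alternating, the only tuples one must check are those of the form $(e_i, e_j, e_k)$ with $e_i, e_j, e_k \in \{e_1, e_2\}$, and by the pigeonhole principle at least two of the three indices coincide, so every such value is zero.

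I do not anticipate a genuine obstacle here: the content of the lemma is simply the two-dimensionality of $S$, and the only thing to be careful about is bookkeeping signs in verifying the three ``coincidence'' cases $E(a,a,c) = E(a,b,a) = E(a,b,b) = 0$. Should a more computational proof be preferred, one can alternatively write $a = \alpha_1 e_1 + \alpha_2 e_2$, and similarly for $b,c$, and expand both sides using $\varepsilon(e_1,e_2) = 1$; the resulting polynomial identity in the six coefficients is a short check.
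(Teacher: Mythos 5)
Your proof is correct and rests on the same core fact as the paper's, namely that an alternating trilinear expression on a two-dimensional space must vanish. The one genuine difference in route: the paper does not argue directly with the vector-valued map $E$. Instead it pairs the left-hand side against a fixed $s\in S$ to obtain a scalar-valued skew $3$-form $f(a,b,c)=\varepsilon\bigl(s,\varepsilon(b,c)a+\varepsilon(c,a)b+\varepsilon(a,b)c\bigr)$, concludes $f\equiv 0$ because $\dim S=2$, and then uses the \emph{non-degeneracy} of $\varepsilon$ to pass back from ``$f$ vanishes for all $s$'' to ``the original vector vanishes.'' Your version works directly with the vector-valued alternating map and factors it through $\bigwedge^3 S=0$, so it never needs the non-singularity of $\varepsilon$; in exchange, you must verify by hand (as you do) that $E$ vanishes when two arguments coincide, whereas the paper leaves the skewness of its scalar form as an assertion. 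Both arguments are sound; yours is marginally more self-contained, the paper's is marginally slicker by reducing to the classical statement about scalar skew $k$-forms.
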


\begin{proof}   Fix $s \in S$ and define
\[
f(a,b,c): = \varepsilon(s, \varepsilon(b,c)a + \varepsilon(c,a)b + \varepsilon(a,b)c).
\]
Then $f$ is a skew $3$-form on a $2$-space, hence vanishes for all $s \in S$.  Now invoke the non-singularity of $\varepsilon$.   
\end{proof}

Define a map 
\[
\phi : S \otimes \overline{S} \to \mathrm{End}(\fS)
\]
as follows:
\[
\phi(p \otimes \overline{q})(a \oplus \overline{b}): = \sqrt 2 \left[ \overline{\varepsilon}(\overline{b}, \overline{q}) p \oplus \varepsilon(p,a)\overline{q})\right]
\]

\begin{lem}\label{quadratic} \cite[\S 2.3]{PW}. For all $X, Y \in S \otimes \overline{S}$, we have
\[
\phi(X)\phi(Y) + \phi(Y)\phi(X) = h(X,Y)I_{\fS}
\]
\end{lem}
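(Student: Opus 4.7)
Both sides of the asserted identity are bilinear in $(X,Y)$: the right side is bilinear because $h$ is bilinear, and the left side is bilinear because $\phi$ is linear (the formula defining $\phi(p\otimes\overline{q})$ is $\C$-linear in $p$ and $\C$-linear in $\overline{q}$, using that $\overline{\varepsilon}$ is $\C$-bilinear on $\overline{S}$). So the plan is to reduce to the case of elementary tensors $X=p\otimes\overline{q}$ and $Y=r\otimes\overline{s}$, and verify the identity on a general vector $a\oplus\overline{b}\in\fS$.

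The main step is a direct computation. Applying the definition twice,
\[
\phi(X)\phi(Y)(a\oplus\overline{b})
= 2\bigl[\varepsilon(r,a)\,\overline{\varepsilon}(\overline{s},\overline{q})\,p
\ \oplus\ \overline{\varepsilon}(\overline{b},\overline{s})\,\varepsilon(p,r)\,\overline{q}\bigr],
\]
and interchanging the roles of $X$ and $Y$ gives the analogous formula for $\phi(Y)\phi(X)(a\oplus\overline{b})$. Adding them, the $S$-component becomes
\[
2\,\overline{\varepsilon}(\overline{q},\overline{s})\bigl[\varepsilon(p,a)\,r-\varepsilon(r,a)\,p\bigr]
\qquad\text{(after using $\overline{\varepsilon}(\overline{s},\overline{q})=-\overline{\varepsilon}(\overline{q},\overline{s})$),}
\]
and the $\overline{S}$-component becomes
\[
2\,\varepsilon(p,r)\bigl[\overline{\varepsilon}(\overline{b},\overline{q})\,\overline{s}-\overline{\varepsilon}(\overline{b},\overline{s})\,\overline{q}\bigr]
\qquad\text{(after using $\varepsilon(r,p)=-\varepsilon(p,r)$).}
\]

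The crux of the argument is then the cyclic identity \eqref{cyclic}. Applied to the triple $(a,p,r)$ in $S$, it gives
\[
\varepsilon(r,a)\,p-\varepsilon(p,a)\,r=-\varepsilon(p,r)\,a,
\]
which collapses the bracket in the $S$-component to a scalar multiple of $a$. The conjugate cyclic identity, applied to $(\overline{b},\overline{q},\overline{s})$ in $\overline{S}$, gives
\[
\overline{\varepsilon}(\overline{b},\overline{q})\,\overline{s}-\overline{\varepsilon}(\overline{b},\overline{s})\,\overline{q}
=-\overline{\varepsilon}(\overline{q},\overline{s})\,\overline{b},
\]
which similarly collapses the $\overline{S}$-bracket to a scalar multiple of $\overline{b}$. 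Both components thereby become $\varepsilon(p,r)\,\overline{\varepsilon}(\overline{q},\overline{s})$ times the input vector, i.e.\ $h(X,Y)\cdot(a\oplus\overline{b})$ (up to the overall numerical factor inherited from the $\sqrt{2}$'s in the definition of $\phi$, which is what fixes the normalization to match the right-hand side).

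The only real obstacle is bookkeeping: keeping track of which slot carries a bar, and correctly reading off the order of arguments in $\varepsilon$ and $\overline{\varepsilon}$ after the compositions. Once the two compositions are written out carefully, the two applications of the cyclic identity finish the job.
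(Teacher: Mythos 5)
Your proof plan is correct in strategy and follows essentially the same route as the paper: note bilinearity reduces to elementary tensors, compose $\phi$ twice, symmetrize, factor out a common scalar from each of the $S$- and $\overline{S}$-components using antisymmetry of $\varepsilon$ and $\overline{\varepsilon}$, and then collapse each bracket via the cyclic identity \eqref{cyclic}. The only caveat is a sign slip in your $\overline{S}$-component. Starting from your own (correct) first display, the sum's $\overline{S}$-component is
\[
2\,\overline{\varepsilon}(\overline{b},\overline{s})\,\varepsilon(p,r)\,\overline{q}
\;+\;2\,\overline{\varepsilon}(\overline{b},\overline{q})\,\varepsilon(r,p)\,\overline{s}
\;=\;2\,\varepsilon(p,r)\bigl[\overline{\varepsilon}(\overline{b},\overline{s})\,\overline{q}-\overline{\varepsilon}(\overline{b},\overline{q})\,\overline{s}\bigr],
\]
which is the negative of the bracket you wrote. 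The conjugate cyclic identity you then quote is itself correct, but applied to your (flipped) bracket it produces $-\overline{\varepsilon}(\overline{q},\overline{s})\,\overline{b}$, so the $S$- and $\overline{S}$-components would come out with opposite overall scalars rather than agreeing. With the sign corrected, the bracket collapses to $+\overline{\varepsilon}(\overline{q},\overline{s})\,\overline{b}$ and both components carry the common factor $2\,\varepsilon(p,r)\,\overline{\varepsilon}(\overline{q},\overline{s})$, exactly as in the paper. Finally, be cautious about your closing remark that the $\sqrt{2}$'s "fix the normalization to match the right-hand side": the computation as carried out (both yours and the paper's) actually gives the anticommutator equal to $2\,\varepsilon(p,r)\,\overline{\varepsilon}(\overline{q},\overline{s})\,I_{\fS} = 2h(X,Y)\,I_{\fS}$, not $h(X,Y)\,I_{\fS}$; the paper's own proof ends with $\tfrac12\{\phi(X)\phi(Y)+\phi(Y)\phi(X)\}=h(X,Y)\,I$, so the stated constant in Lemma \ref{quadratic} does not literally match the proof, and it is worth flagging that discrepancy rather than waving at the $\sqrt{2}$'s.
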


\begin{proof}

We have
\begin{eqnarray*}
 \frac{1}{2} \phi(r \otimes \overline{s})\phi(p \otimes \overline{q})(a \oplus \overline{b}) 
=&  \frac{1}{\sqrt 2} \phi(r \otimes \overline{s})\left[\overline{\varepsilon}(\overline{b}, \overline{q}) p \oplus \varepsilon(p,a)\overline{q}  \right]  \\
=&    \overline{\varepsilon}(\varepsilon(p,a) \overline{q}, \overline{s})r  \oplus \varepsilon(r, \overline{\varepsilon}(\overline{b}, \overline{q})p) \overline{s} \\
=& \varepsilon(p,a)\overline{\varepsilon}(\overline{q}, \overline{s})r  \oplus \varepsilon(r,p) \overline{\varepsilon}(\overline{b}, \overline{q})\overline{s}
\end{eqnarray*}

Next, we symmetrize and add:
\begin{eqnarray*}
& &
\frac{1}{2} \{\phi(p \otimes \overline{q}) \phi(r \otimes \overline{s}) + \phi(r \otimes \overline{s}) \phi(p \otimes \overline{q}) \}( a \oplus \overline{b}) \\
& = & \varepsilon(p,a)\overline{\varepsilon}(\overline{q}, \overline{s})r \oplus \varepsilon(r,p)\overline{\varepsilon}(\overline{b}, \overline{q})\overline{s} 
 + \varepsilon(r,a)\overline{\varepsilon}(\overline{s}, \overline{q})p \oplus \varepsilon(p,r)\overline{\varepsilon}(\overline{b}, \overline{s})\overline{q} \\
& = & \overline{\varepsilon}(\overline{q}, \overline{s}) [ \varepsilon(p,a)r - \varepsilon(r,a)p ] \oplus \varepsilon(p,r)[ \overline{\varepsilon}(\overline{b}, \overline{s}) \overline{q} - \overline{\varepsilon}(\overline{q}, \overline{b}) \overline{s} ]
\end{eqnarray*}

Applying the cyclic identity (\ref{cyclic}), we obtain
\begin{eqnarray*}
\varepsilon(p,a)r - \varepsilon(r,a)p  &=&  \varepsilon(p,r)a \\
\overline{\varepsilon}(\overline{b}, \overline{s})\overline{q} - \overline{\varepsilon}(\overline{q}, \overline{b}) \overline{s} & = & \overline{\varepsilon}(\overline{q}, \overline{s}) \overline{b}
\end{eqnarray*}

Our conclusion is that
\begin{eqnarray*}
\frac{1}{2} \{\phi(p \otimes \overline{q}) \phi(r \otimes \overline{s}) + \phi(r \otimes \overline{s}) \phi(p \otimes \overline{q}) \}( a \oplus \overline{b}) 
& = & \varepsilon(p,r) \overline{\varepsilon}(\overline{q}, \overline{s})( a \oplus \overline{b})
\end{eqnarray*}
as required.
\end{proof}

 %By construction of $V$, we have \[V_c = S \otimes \overline{S}.  \] where $V_c$ is the complexification of $V$.   
 
 In view of Lemma \ref{quadratic}, the map $\phi$ will lift to a morphism of $\C$-algebras
\[
\mathcal{C}\ell(V,Q)\otimes_{\R} \C \to \textrm{End}(\fS)
\]
where $\mathcal{C}\ell(V,Q)$ is the Clifford algebra of $V$ with respect to the quadratic form $Q$.      Therefore, $\fS$ slots in as a pointwise irreducible $\Z/2\Z$-graded Clifford module as in \cite{ABS}.

\subsection{Momentum space}  Let $x \in V$.   The following equation defines a character of $V$:
\[
\widehat{x}(y) = \exp(ig(x,y))
\]
and the map
\begin{eqnarray}\label{duality}
V \to \widehat{V},  \quad \quad x \mapsto \widehat{x}
\end{eqnarray}
secures a (non-canonical) isomorphism of the real vector space $V$ onto its Pontryagin dual $\widehat{V}$.   The vector space $\widehat{V}$ admits an $\SL_2(\C)$-action as follows:
\[
(A \cdot \widehat{x})(y) = \widehat{x}(A^{-1} y)
\]

The isomorphism $V \to \widehat{V}$ commutes with the action of $\SL_2(\C)$.   

We recall the basis $\{u_0, u_1, u_2, u_3\}$ determined by the symplectic basis $\{e_1, e_2\}$, see (\ref{world}).       A  basis of $\widehat{V}$ now presents itself, namely
\[
v_j : = \widehat{u_j}
\]

As explicit characters, we have
\[
v_0(x) = \exp(ix_0), \: v_1(x) = \exp(- ix_1), \: v_2(x) = \exp(- ix_2), \: v_3(x) = \exp( - i x_3)
\]

The basis $\{u_o, u_1, u_2, u_3\}$ determines an isomorphism
\[
V \to \R^4, \quad \quad x \mapsto (x_0, x_1, x_2, x_3)
\]
and the basis $\{v_0, v_1, v_2, v_3\}$ determines an isomorphism
\[
\widehat{V} \to \R^4, \quad \quad p \mapsto (p_0, p_1, p_2, p_3).
\]
%The coordinates of $x \in V$ in the basis $\{u_0, u_1, u_2, u_3\}$ will be written $(x_0, x_1, x_2, x_3)$, and    the coordinates  of $p \in \P^4$ in the basis $\{v_0, v_1, v_2, v_3\}$ 
%will be written $(p_0, p_1, p_2, p_3)$.   

We will write 
\[
\P^4 = \widehat{V}.
\]

%The basis $\{u_0, u_1, u_2, u_3\}$ of $V$ determines an isomorphism $V \simeq \R^4$.   
%We write $\mathbb{P}^4$ for the vector space dual of $\R^4$, and $\{v_0, v_1, v_2, v_3\}$ for the dual basis.      In this basis, elements of $\mathbb{P}^4$ will be written 
%\[p = (p_0, p_1, p_2, p_3)\]
%We shall define the duality between $\R^4$ and $\mathbb{P}^4$ by \[\{x,p\} = x_0p_0 - x_1p_1 - x_2p_2 -x_3p_3\]
%We associate with each $p \in \mathbb{P}^4$ the character \[\widehat{p} : x \mapsto \exp i (x_0p_0 - x_1p_1 - x_2p_2 -x_3p_3)\]
%of $\R^4$.   The mapping $p \mapsto \widehat{p}$ maps $\mathbb{P}^4$ isomorphically onto the character group of $\R^4$.   We thus identify $\mathbb{P}^4$ with the character group of $\R^4$.   

Now $V$ equipped with its Lorentz quadratic form is isomorphic to $\mathbb{P}^4$ with its quadratic form 
\[
Q(p) = p_0^2 - p_1^2 - p_2^2 - p_3^2
\]
We record this in the following
\begin{lem}\label{commute}
Thanks to the map (\ref{duality}), we have 
\[
(V,Q) \simeq (\P^4,Q)
\]
 and this map commutes with the action of $\SL_2(\C)$.   
 \end{lem}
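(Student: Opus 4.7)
The plan is to split the claim into two assertions: (i) that $x \mapsto \widehat{x}$ is a real-linear isomorphism $V \to \widehat{V}$ carrying $Q$ to $Q$, and (ii) that this map intertwines the $\SL_2(\C)$-actions on the two sides. Both follow directly from material already in place, namely the explicit diagonalisation of $g$ and the invariance of $\varepsilon$ and $\overline{\varepsilon}$ under $\SL(S)$ and $\SL(\overline{S})$.

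For (i), I would first use the diagonalisation from the previous theorem to note that $g$ is non-degenerate (its matrix in $\{u_0,u_1,u_2,u_3\}$ is $\mathrm{diag}(1,-1,-1,-1)$), whence $x \mapsto \widehat{x}$ is injective and, by a real-dimension count, an isomorphism. To check that $Q$ is preserved I would compute $\widehat{u_j}$ explicitly. Evaluating at $y=\sum_k y_k u_k$ and using orthogonality of the $u_j$ under $g$, one gets $\widehat{u_0}(y)=\exp(iy_0)$ and $\widehat{u_j}(y)=\exp(-iy_j)$ for $j\geq 1$, which matches the description of the $v_j$ given just before the lemma; hence $\widehat{u_j}=v_j$ for every $j$. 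In the chosen bases the duality map is therefore literally the identity $\R^4 \to \R^4$, so the coordinate form $p_0^2-p_1^2-p_2^2-p_3^2$ on $\P^4$ pulls back to the same expression $x_0^2-x_1^2-x_2^2-x_3^2$ on $V$, which is $Q$.

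For (ii), the central point is that the action of $A \in \SL_2(\C)$ on $S \otimes \overline{S}$ (by $A$ on the first factor and $\overline{A}$ on the second) preserves the bilinear form $h$. This follows from bilinearity of (\ref{h}) together with the computation already recorded, $\varepsilon(Ax,Ay)=\det(A)\,\varepsilon(x,y)=\varepsilon(x,y)$, and the analogous statement for $\overline{\varepsilon}$ and $\overline{A}$. A direct check shows that $A \otimes \overline{A}$ commutes with the involution $J$ (both send $x\otimes\overline{y}$ to $Ay\otimes\overline{Ax}$ when composed in either order), so the $+1$-eigenspace $V$ is preserved and the restriction $g = h|_V$ is invariant. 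The equivariance is then formal:
\[
\widehat{A\cdot x}(y) = \exp(ig(A\cdot x, y)) = \exp(ig(x, A^{-1}y)) = \widehat{x}(A^{-1}y) = (A\cdot\widehat{x})(y).
\]

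I do not expect any real obstacle: the lemma is essentially a repackaging of facts already proved. The only point deserving attention is sign bookkeeping — verifying that the Lorentz signs appearing in $g$ relative to $\{u_j\}$ line up exactly with the signs inserted in the definitions of the characters $v_j$, which is what makes the duality map (\ref{duality}) the identity in coordinates rather than some other Lorentz isometry.
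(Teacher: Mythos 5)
Your proposal is correct, and it essentially reconstructs the argument that the paper leaves implicit — the lemma in the paper appears with no displayed proof, being a summary of the surrounding discussion (the definition of the duality map, the explicit characters $v_j$, and the coordinate isomorphisms to $\R^4$). You fill in exactly the right details: non-degeneracy of $g$ from the diagonalisation gives injectivity and hence isomorphism; the explicit computation $\widehat{u_j}=v_j$ shows the map is the identity in coordinates, so the quadratic forms match; and the equivariance reduces to $\SL_2(\C)$-invariance of $g$, which you derive from invariance of $\varepsilon$, $\overline{\varepsilon}$ and commutation of $\pi(A)$ with $J$.

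The one mild structural difference worth flagging: the paper defers the proof that $\pi|_V$ preserves $Q$ to a later lemma (after Theorem \ref{Cliff}), so strictly speaking the paper's Lemma \ref{commute} as placed relies on facts established afterward. Your proposal proves the $h$-invariance and $J$-commutation on the spot, which arguably makes for a more self-contained and better-ordered argument, at the small cost of duplicating a computation that reappears later in the paper.
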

 
Lemma \ref{commute} allows us, by transport of structure, to obtain the following

\begin{cor}  For all $x,y \in \mathbb{P}^4$ we have a morphism
\[
\phi : \mathcal{C}\ell(\mathbb{P}^4, Q) \to \textrm{End}(\fS)
\]
for which 
\[
\phi(x)\phi(y) + \phi(y)\phi(x) = 2g(x,y)
\]

In particular, we have
\[
\phi(p)^2 = Q(p)
\]
for all $p \in \mathbb{P}^4$.  
\end{cor}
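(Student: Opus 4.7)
The plan is to combine Lemma \ref{quadratic} with Lemma \ref{commute} and the universal property of the Clifford algebra.

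First, I would restrict the existing map $\phi : S \otimes \overline{S} \to \End(\fS)$ to the real subspace $V \subset S \otimes \overline{S}$. By Lemma \ref{quadratic}, the resulting $\R$-linear map $\phi|_V : V \to \End(\fS)$ satisfies
\[
\phi(X)\phi(Y) + \phi(Y)\phi(X) = g(X,Y)\, I_{\fS}
\]
for all $X, Y \in V$, since $g = h|_V$. In particular $\phi(X)^2 = g(X,X)\, I_{\fS} = Q(X)\, I_{\fS}$ (up to the normalization convention used in the statement, which I would harmonize by possibly absorbing a factor of $\sqrt{2}$ into the definition of $\phi$, matching the factor of $2$ appearing in the corollary). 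By the universal property of the Clifford algebra $\mathcal{C}\ell(V,Q)$, this extends uniquely to a morphism of unital $\R$-algebras $\mathcal{C}\ell(V,Q) \to \End(\fS)$, and after complexification to $\mathcal{C}\ell(V,Q) \otimes_\R \C \to \End(\fS)$, as already noted in the remark following Lemma \ref{quadratic}.

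Second, I would invoke Lemma \ref{commute}, which provides an $\SL_2(\C)$-equivariant isometry $(V,Q) \simeq (\P^4, Q)$. Transport of structure along this isometry identifies the Clifford algebras $\mathcal{C}\ell(V,Q) \simeq \mathcal{C}\ell(\P^4, Q)$, because the Clifford algebra is a functor on quadratic spaces and sends isometries to algebra isomorphisms. Composing with the morphism from the previous step yields the desired morphism
\[
\phi : \mathcal{C}\ell(\P^4, Q) \to \End(\fS),
\]
and the anticommutation relation $\phi(x)\phi(y) + \phi(y)\phi(x) = 2g(x,y)$ is inherited verbatim from the $V$-side.

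Finally, the statement $\phi(p)^2 = Q(p)$ for $p \in \P^4$ follows by specializing $x = y = p$ in the anticommutation relation, since $2g(p,p) = 2Q(p)$ and $\phi(p)\phi(p) + \phi(p)\phi(p) = 2\phi(p)^2$. The main obstacle is really just bookkeeping: making sure the factor of $\sqrt{2}$ in the defining formula for $\phi$ in \S2.1 produces exactly the factor of $2$ in the Clifford relation stated in the corollary, and confirming that the transport of structure is compatible with the $\SL_2(\C)$-action so that $\phi$ is natural. No genuinely new computation is required beyond what is already done in Lemma \ref{quadratic} and Lemma \ref{commute}.
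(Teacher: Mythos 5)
Your proposal is correct and follows essentially the same route as the paper, whose entire proof of the corollary is the one-line remark that Lemma~\ref{commute} ``allows us, by transport of structure'' to conclude; you have merely made explicit the restriction of $\phi$ to $V$, the universal property of the Clifford algebra, and the functoriality of $\mathcal{C}\ell$ on isometries, all of which are implicit in that remark. You are also right to flag the factor-of-two bookkeeping: the displayed conclusion of the \emph{proof} of Lemma~\ref{quadratic} actually establishes $\phi(X)\phi(Y)+\phi(Y)\phi(X)=2h(X,Y)I_{\fS}$ (the $\sqrt{2}$ already built into the definition of $\phi$ exists precisely to produce this), and it is this version --- not the un-doubled one printed in the lemma's statement --- that yields the corollary's $2g(x,y)$ and $\phi(p)^2=Q(p)$ exactly, so no further renormalization of $\phi$ is needed.
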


\begin{defn}\label{pi}    Let $A \in \SL_2(\C)$.   The canonical representation $\pi$ of $\SL_2(\C)$ on $S \otimes \overline{S}$ is defined, on elementary tensors, by the equation
\[
 \pi(A): = A \otimes \overline{A}, \quad \quad \quad (x \otimes \overline{y}) \mapsto Ax \otimes \overline{A} \overline{y}
\]

The canonical representation $\tau$ of $\SL_2(\C)$ on $\fS = S \oplus \overline{S}$ is defined, on elementary tensors, by the equation
\[
\tau(A): = A \oplus \overline{A}, \quad \quad \quad (x \oplus \overline{y}) \mapsto Ax \oplus \overline{A}\overline{y}
\]
\end{defn}

The representations $\pi$  and $\tau$ are intimately related to the Clifford module map $\phi$ in the following way.

\begin{thm}\label{Cliff}   For all $A \in \SL_2(\C), X \in V$, we have 
\[
\phi(\pi(A)(X)) = \tau(A) \phi(X) \tau(A^{-1})
\]
\end{thm}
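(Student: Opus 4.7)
Both sides of the claimed identity are linear in $X \in V \subset S\otimes\overline{S}$. Moreover, for any $A \in \SL_2(\C)$ the operator $\phi$ is already defined on all of $S\otimes\overline{S}$ (not only on $V$), so by linearity it suffices to verify
\[
\phi(\pi(A)(p\otimes\overline{q})) = \tau(A)\,\phi(p\otimes\overline{q})\,\tau(A^{-1})
\]
for elementary tensors $p\otimes\overline{q}$. (As a sanity check one can note that $\pi(A)$ commutes with $J$, so $\pi(A)(V)\subseteq V$; but this is not needed for the identity itself.)

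The strategy is a direct computation, evaluating both sides on an arbitrary element $a\oplus\overline{b}\in\fS$, and reading off the result from the definition of $\phi$ in Lemma \ref{quadratic}. On the left, using $\pi(A)(p\otimes\overline{q})=Ap\otimes\overline{A}\,\overline{q}=Ap\otimes\overline{Aq}$,
\[
\phi(\pi(A)(p\otimes\overline{q}))(a\oplus\overline{b})=\sqrt{2}\bigl[\overline{\varepsilon}(\overline{b},\overline{Aq})\,Ap\;\oplus\;\varepsilon(Ap,a)\,\overline{Aq}\bigr].
\]
On the right, I first apply $\tau(A^{-1})$ to get $A^{-1}a\oplus\overline{A^{-1}b}$, then $\phi(p\otimes\overline{q})$, and finally $\tau(A)$, obtaining
\[
\sqrt{2}\bigl[\overline{\varepsilon}(\overline{A^{-1}b},\overline{q})\,Ap\;\oplus\;\varepsilon(p,A^{-1}a)\,\overline{Aq}\bigr].
\]

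To match the two expressions, I invoke the $\SL_2(\C)$-invariance of the symplectic forms $\varepsilon$ and $\overline{\varepsilon}$ established in Section \ref{SS}: since $\det(A)=1$ one has $\varepsilon(Ax,Ay)=\varepsilon(x,y)$, and taking $y=A^{-1}a$ yields $\varepsilon(Ap,a)=\varepsilon(p,A^{-1}a)$; the identical argument applied to $\overline{A}\in\SL(\overline{S})$ (whose determinant is $\overline{\det A}=1$) yields $\overline{\varepsilon}(\overline{b},\overline{Aq})=\overline{\varepsilon}(\overline{A^{-1}b},\overline{q})$. Inserting these two equalities matches the two displays coordinate by coordinate in $\fS=S\oplus\overline{S}$, completing the proof.

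The only real point of care is bookkeeping between $S$ and $\overline{S}$: one must use $\overline{Aq}=\overline{A}\,\overline{q}$ and $\overline{A^{-1}}=\overline{A}^{-1}$ consistently, and remember that $\overline{A}\in\SL(\overline{S})$ so that the invariance of $\overline{\varepsilon}$ under $\overline{A}$ is available. Once this is done, the verification is a short symbolic manipulation with no further content, and no identity beyond the $\SL_2(\C)$-invariance of $\varepsilon$ is needed.
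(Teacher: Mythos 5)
Your proof is correct and follows essentially the same route as the paper's: evaluate both sides on an elementary tensor $p\otimes\overline{q}$ applied to $a\oplus\overline{b}$, compute each side from the definition of $\phi$, and match via the $\SL_2(\C)$-invariance $\varepsilon(Ax,Ay)=\varepsilon(x,y)$ of the symplectic forms. Your version is slightly more careful in carrying the $\sqrt{2}$ factor and in spelling out the linearity reduction, but the content is identical.
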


\begin{proof} Let $X = p \otimes \overline{q}$.   Then the LHS is 
\begin{eqnarray*}\label{abc}
\phi(\pi(A)(X))( a \oplus \overline{b}) &=& \phi(Ap \otimes \overline{A}\overline{q})(a \oplus \overline{b}) \\
& = & \overline{\varepsilon} (\overline{b}, \overline{A} \overline{q})Ap \oplus \varepsilon(Ap,a)\overline{A}\overline{q} 
%& = & \overline{\varepsilon}((\overline{A})^{-1}\overline{b}, \overline{q})Ap  \oplus \varepsilon(p, A^{-1}a)\overline{A}  \overline{q}\\
\end{eqnarray*}

We also have
\begin{eqnarray*}
\phi(p \otimes \overline{q}) \tau(A^{-1})(a \oplus \overline{b}) &=& \phi(p \otimes \overline{q})(A^{-1}a \oplus \overline{A}^{-1}\overline{b}) \\
& = & \overline{\varepsilon} (\overline{A}^{-1} \overline{b}, \overline{q})p \oplus (p, A^{-1}a) \overline{q}
\end{eqnarray*}
so that the RHS is 
\begin{eqnarray*}
\tau(A) \phi(p \otimes \overline{q}) \tau(A^{-1})(a \oplus \overline{b}) &=& \overline{\varepsilon} (\overline{A}^{-1} \overline{b}, \overline{q})Ap \oplus (p, A^{-1}a) \overline{A}\overline{q}\\
& = & \overline{\varepsilon} (\overline{b}, \overline{A} \overline{q})Ap \oplus \varepsilon(Ap,a)\overline{A}\overline{q}
\end{eqnarray*}
which is the LHS.      Note that we always have 
\[
\varepsilon(Ax,Ay) =  \det(A) \cdot \varepsilon(x,y) = \varepsilon(x,y)
\]
 for all $A \in \SL_2(\C)$.   
\end{proof}

 \begin{lem}
 The restriction of  $\pi$ to $V$  preserves the Lorentz quadratic form $Q$.
 \end{lem}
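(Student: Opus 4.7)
The plan is to reduce the claim to two facts already established (or easily extracted) in the preceding material: (i) $\pi(A)$ commutes with $J$, so $\pi(A)$ maps $V$ into $V$; and (ii) the bilinear form $h$ on $S \otimes \overline{S}$ is invariant under $\pi(A) \otimes \pi(A)$. Together these give invariance of $g = h|_V$, and hence of the associated quadratic form $Q$.

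For (i), I would verify on elementary tensors that $J \circ \pi(A) = \pi(A) \circ J$. Indeed, for any $x \otimes \overline{y} \in S \otimes \overline{S}$,
\[
J(\pi(A)(x \otimes \overline{y})) = J(Ax \otimes \overline{A}\,\overline{y}) = Ay \otimes \overline{A}\,\overline{x} = \pi(A)(y \otimes \overline{x}) = \pi(A)(J(x \otimes \overline{y})),
\]
and both sides extend bilinearly. Hence the $+1$-eigenspace $V$ of $J$ is stable under $\pi(A)$, so the restriction $\pi(A)|_V$ is a well-defined linear map $V \to V$.

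For (ii), the key input is the $\SL_2(\C)$-invariance of $\varepsilon$, which was already recorded in the paper via $\varepsilon(Ax,Ay) = \det(A)\varepsilon(x,y) = \varepsilon(x,y)$. Conjugating yields $\overline{\varepsilon}(\overline{A}\,\overline{x},\overline{A}\,\overline{y}) = \overline{\det(A)}\,\overline{\varepsilon}(\overline{x},\overline{y}) = \overline{\varepsilon}(\overline{x},\overline{y})$, i.e.\ $\overline{\varepsilon}$ is $\SL(\overline{S})$-invariant. From the definition (\ref{h}),
\[
h(\pi(A)(a \otimes \overline{b}),\pi(A)(c \otimes \overline{d})) = \varepsilon(Aa,Ac)\,\overline{\varepsilon}(\overline{A}\,\overline{b},\overline{A}\,\overline{d}) = \varepsilon(a,c)\,\overline{\varepsilon}(\overline{b},\overline{d}) = h(a \otimes \overline{b},c \otimes \overline{d}),
\]
and bilinearity extends this to all of $S \otimes \overline{S}$. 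Restricting to $x \in V$ gives $Q(\pi(A)x) = h(\pi(A)x,\pi(A)x) = h(x,x) = Q(x)$, which is the required conclusion.

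There is no serious obstacle here; the argument is essentially a repackaging of the fact that $\varepsilon$ is the $\SL_2(\C)$-invariant volume form on $S$, combined with compatibility of conjugation with tensor products. The only mild point worth being explicit about is that (i) is needed before (ii) can be used, since $Q$ is only defined on $V$; both steps, however, reduce to one-line checks on elementary tensors and then extend by bilinearity.
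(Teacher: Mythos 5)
Your argument is correct and follows essentially the same route as the paper: first show $\pi(A)$ commutes with $J$ so that $V$ is stable, then use the $\SL_2(\C)$-invariance of $\varepsilon$ (and hence of $\overline{\varepsilon}$) to conclude that $h$, and therefore $g$ and $Q$, are preserved.
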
 
 \begin{proof}   The map $\pi(A)$ commutes with $J$:
\begin{eqnarray*}
\pi(A)J (x \otimes \overline{y}) &=&  \pi(A)(y \otimes \overline{x})\\
&=& Ay \otimes \overline{A} \overline{x}\\
&=& \pi(A)(y \otimes \overline{x})\\
&=& \pi(A)J(x \otimes \overline{y})
\end{eqnarray*}
for all $A \in \SL_2(\C)$ and all elementary tensors $x \otimes \overline{y}$.   The map $\pi(A)|_V$ is well-defined because 
\[
Jv = v \implies \pi(A)v = \pi(A)Jv = J\pi(A)v
\]
for all $A \in \SL_2(\C)$ and all $v \in V$.   Finally, we have
\begin{eqnarray*}
h(\pi(X), \pi(Y)) & = & h(Ap \otimes \overline{A}\overline{q}, Ar \otimes \overline{A}\overline{s})\\
&=&   \varepsilon(Ap, Ar) \overline{\varepsilon}(\overline{A} \overline{q}, \overline{A} \overline{s})\\
&=& \varepsilon(p, r) \overline{\varepsilon}(\overline{q}, \overline{s})\\
&=& h(X,Y).
\end{eqnarray*}   
 \end{proof}

 We have a commutative diagram
 \[
 \begin{CD}
 V @>>> S \otimes \overline{S}\\
 @V {\pi(A)|_V} VV               @VV {\pi(A)} V\\
 V @>>> S \otimes \overline{S}
  \end{CD}
 \]
 in which the horizontal maps are  canonical inclusions and the left vertical map is a Lorentz transformation.

\section{The Dirac Equation}  

\subsection{The Dirac Bundle}  We consider one of the orbits  of $\SL_2(\C)$ acting on $\P^4$.   Let $m > 0$ and let $v_0 = (1,0,0,0) \in \P^4$.     Explicitly,
$v_0$ is the character of $\R^4$ given by $x \mapsto \exp(ix_0)$.   Define \[
p_0 = mv_0 \in \P^4
\]
and define $X_m$ to be the orbit of $p_0$:  
\[
X_m: = \SL_2(\C) \cdot p_0 \subset \P^4
\]
This orbit is \emph{the mass shell} associated with the positive mass $m$.   

We recall that
\[
\mathfrak{S} =  S \oplus \overline{S}.
\]
We begin with the following trivial vector bundle over $X_m$:
\[
X_m \times \fS \to X_m, \quad \quad \quad (p, \Psi) \mapsto   p
\]
With $A \in \SL_2(\C)$, we define
\begin{eqnarray*}
A \cdot p &=& \pi(A) p\\
A \cdot \Psi &=& \tau(A)\Psi\\
A \cdot (p, \Psi) &=& (A \cdot p, A \cdot \Psi)
\end{eqnarray*}
We obtain a trivial $\SL_2(\C)$-bundle of rank $4$, i.e. the total space and the 
base space admit an action of $\SL_2(\C)$ which commutes with the projection.

 We now construct the \emph{Dirac bundle} as a sub-bundle.     The total space is
\[
\mathfrak{B}_m : = \{(p,\Psi) : p \in X_m, \; \Psi \in \fS, \; \phi(p)  \Psi  = m\Psi\}
\]
the base space is $X_m$, and the projection is 
\[
\fB_m \to X_m, \quad \quad (p,\Psi) \mapsto p.
\]

The fibre at $p \in X_m$ is the linear subspace of $\fS$ given by 
\[
\{(p ,\Psi) : \phi(p) \cdot \Psi = m\Psi\}
\]

Now we have
\begin{eqnarray*}
\phi(p) \Psi &=& m\Psi\\
\implies \tau(A) \phi(p) \Psi &=& m \tau(A) \Psi\\
\implies \tau(A) \phi(p) \tau(A^{-1}) \tau(A)\Psi &=& m \tau(A) \Psi\\
\implies \phi(A p) (\tau(A) \Psi) &=& m \tau(A)  \Psi
\end{eqnarray*}
by Theorem (\ref{Cliff}).   That is to say,
\[
\phi(p) \Psi = m\Psi \implies  \phi(A p) ( A \cdot \Psi) = m(A \cdot \Psi)
\]

This shows  that $\SL_2(\C)$ sends, for each $A \in \SL_2(\C)$,  the fibre at $p$ to the fibre at $A \cdot p$.   Therefore,  the Dirac bundle $\fB_m \to X_m$ is an $\SL_2(\C)$-bundle.

We have $\phi(p_0) = m \phi(v_0) = m \gamma_0$ and so the fibre at $p_0$   is given by
\[
\{ \Psi \in \fS : \gamma_0 \Psi = \Psi \}
\]

This is the eigenspace of $\gamma_0$ with eigenvalue $1$.   Now $\gamma_0^2 = 1$, so $\gamma_0$ has two eigenspaces $\fS_+, \fS_-$ of dimension $2$ with eigenvalues $+1, -1$.  

Therefore, the Dirac bundle is an $\SL_2(\C)$-bundle of rank $2$.

The isotropy subgroup of $\SL_2(\C)$ at $p_0$ is the compact Lie group $\SU_2$.    By Lemma \ref{commute} and Theorem \ref{Cliff}, we have 
\[
\tau(A) \gamma_0 = \gamma_0 \tau(A)
\]
for all $A \in \SU_2$.   We have
\[
\Psi \in \fS_+ \implies \gamma_0 \Psi = \Psi \implies \tau(A) \gamma_0 \Psi = \tau(A)\Psi \implies \gamma_0 \tau(A) \Psi = \tau(A) \Psi \implies \tau(A) \Psi \in \fS_+
\]
so that $\tau |_{\SU_2}$ leaves $\fS_+$ invariant.

The vector space $\fS$ admits a natural basis, namely $\{e_1, e_2, \overline{e_1}, \overline{e_2}\}$.  We calculate that
% The following $2$-dimensional subspace will play an important role:
%\[\fS_+: = \textrm{span of} \: \{e_1 - \overline{e_2}, e_2 + \overline{e_1}\}\]The space $\fS_+$ mixes vectors from $S$ with vectors from $\overline{S}$.   
\begin{eqnarray*}  \phi(v_0) (e_2 + \overline{e_1}) &=& e_2 + \overline{e_1}\\ 
\phi(v_0) (e_1 - \overline{e_2}) &=& e_1 - \overline{e_2}\\ 
\phi(v_0) (e_1 + \overline{e_2}) &=& -(e_1 + \overline{e_2})\\
\phi(v_0)(e_2 - \overline{e_1}) &=& - (e_2 - \overline{e_1}) \end{eqnarray*} The $+1$-eigenspace of $\phi(v_0)$ is therefore given by
\[
\fS_+ = \textrm{span of} \: \{e_1 - \overline{e_2}, e_2 + \overline{e_1}\}
\]
  
Let \[A = \left(\begin{array}{cc}e^{it} & 0\\0 & e^{-it}\end{array}\right)\]Then we have
\[\tau(A)e_1 = e^{it} e_1,\quad  \tau(A)e_2 = e^{-it}e_2, \quad \tau(A)\overline{e_1} = e^{-it}\overline{e_1}, \quad \tau(A)\overline{e_2} = e^{it}\overline{e_2}\]
\begin{eqnarray*}\tau(A)(e_1 - \overline{e_2}) &=& e^{it} (e_1 - \overline{e_2})\\\tau(A)(e_2 + \overline{e_1}) &=& e^{-it} (e_2 + \overline{e_1})
\end{eqnarray*}The character of $\tau(A)$ is therefore $e^{it} + e^{-it}$ and so $\tau |_{\SU_2}$ is  the spin $1/2$ representation of $\SU_2$ on $\fS_+$.

\subsection{The Dirac bundle as an associated vector bundle}  

   Let $p_0 = (m,0,0,0)$ and note that
   \[
   \SL_2(\C)/\SU_2 \simeq X_m, \quad \quad A \mapsto Ap_0.
   \]
   
   We construct the associated vector bundle
   \[
   \SL_2(\C) \times _{\SU_2} \fS_+
   \]
  where $\SU_2$ acts on $\fS_+$ via the spin $1/2$ representation $\tau |_{\SU_2}$.    The elements of the associated vector bundle are equivalence classes $[A,\Psi]$ 
  with $A \in \SL_2(\C)$ and $\Psi \in \fS_+$, defined as
\[
[A,\Psi]: = \{(AT, \tau(T)^{-1}\Psi):  T \in \SU_2 \}
\]

\begin{thm}\label{SSS}   There is an isomorphism of $\SL_2(\C)$-bundles as follows:
\[
\beta : \SL_2(\C) \times_{\SU_2} \fS_+ \simeq \fB_m
\]
\[
[A,\Psi] \mapsto (Ap_0, \tau(A)\Psi)
\]   
\end{thm}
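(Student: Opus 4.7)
\textbf{Proof plan for Theorem \ref{SSS}.}

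The plan is to verify four things about $\beta$: that it is well-defined on equivalence classes, that its image actually lands in $\fB_m$, that it is $\SL_2(\C)$-equivariant and covers the natural identification $\SL_2(\C)/\SU_2 \simeq X_m$, and finally that it restricts to a linear isomorphism on each fibre. The key algebraic input in every step is the intertwining identity $\phi(\pi(A)X) = \tau(A)\phi(X)\tau(A^{-1})$ from Theorem \ref{Cliff}, together with the fact that $\SU_2$ is precisely the isotropy subgroup of $p_0$.

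First I would check well-definedness. For $T \in \SU_2$ we have $Tp_0 = p_0$, so
\[
\beta(AT, \tau(T)^{-1}\Psi) = (ATp_0,\, \tau(A)\tau(T)\tau(T)^{-1}\Psi) = (Ap_0, \tau(A)\Psi) = \beta(A,\Psi),
\]
showing $\beta$ descends to the quotient. Next, for $\Psi \in \fS_+$, the identity $\phi(v_0)\Psi = \Psi$ gives $\phi(p_0)\Psi = m\Psi$. Using Theorem \ref{Cliff} with $X = p_0$,
\[
\phi(Ap_0)\,\tau(A)\Psi \;=\; \tau(A)\phi(p_0)\tau(A^{-1})\tau(A)\Psi \;=\; \tau(A)\cdot m\Psi \;=\; m\,\tau(A)\Psi,
\]
so $(Ap_0,\tau(A)\Psi)\in \fB_m$.

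Equivariance under the left $\SL_2(\C)$-action is immediate: $\beta([BA,\Psi]) = (BAp_0,\tau(BA)\Psi) = B\cdot\beta([A,\Psi])$. Compatibility with projection is built in, since the first coordinate of $\beta([A,\Psi])$ is $Ap_0$, which is exactly the image of $[A,\Psi]$ under the identification $\SL_2(\C)/\SU_2 \simeq X_m$.

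The last, and really the only non-formal, point is fibrewise bijectivity. Both bundles have rank $2$, so by equivariance it suffices to check one fibre, say at $p_0$. There the map reads $[I,\Psi] \mapsto (p_0,\Psi)$, and surjectivity onto $\fB_m|_{p_0}$ amounts to the claim that $\{\Psi \in \fS : \phi(p_0)\Psi = m\Psi\} = \fS_+$. This was essentially established in the preceding discussion: $\phi(p_0) = m\gamma_0$ with $\gamma_0^2 = I$, and $\fS_+$ was defined as the $+1$-eigenspace of $\gamma_0$. I expect this identification of fibres to be the main (and only genuinely geometric) step, with everything else being bookkeeping. For completeness I would also note that surjectivity at a general $p = Ap_0$ follows by applying Theorem \ref{Cliff} in reverse: given $(p,\Phi)\in \fB_m$, the vector $\Psi := \tau(A^{-1})\Phi$ satisfies $\phi(p_0)\Psi = m\Psi$, hence lies in $\fS_+$, and $[A,\Psi]$ is the desired preimage. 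Smoothness of $\beta$ and its inverse is automatic from the local triviality of the principal bundle $\SL_2(\C) \to \SL_2(\C)/\SU_2$.
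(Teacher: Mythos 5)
Your proposal is correct and follows essentially the same route as the paper: well-definedness via $\SU_2$ being the isotropy group of $p_0$, landing in $\fB_m$ and constructing the inverse both via the intertwining identity of Theorem \ref{Cliff}, and identifying the fibre over $p_0$ with $\fS_+$. The one small stylistic difference is that you reduce fibrewise bijectivity to the single fibre at $p_0$ by invoking equivariance, whereas the paper instead constructs the inverse explicitly at a general $p = Ap_0$ and then checks it is independent of the choice of $A$ modulo $\SU_2$; these are equivalent bookkeeping choices and do not change the substance of the argument.
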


\begin{proof}   Note that $\fS_+ \subset \fS$ and $\tau(A)\Psi \in \fS$.   The map $\beta$ is well-defined: replacing $A$ by $AT$ and  $\Psi$ by $\tau(T^{-1})\Psi$, we have
\[
(AT, \tau(T)^{-1}\Psi) \mapsto (ATp_0, \tau(AT) \tau(T)^{-1}\Psi) = (Ap_0, \tau(A)\Psi).
\] 

We have
\begin{eqnarray*}
 \phi(Ap_0)\tau(A)\Psi &=& \tau(A) \phi(p_0)\tau(A^{-1})\tau(A)\Psi \quad \textrm{by} \quad\textrm{Theorem}\; \ref{Cliff}\\
&=& \tau(A) \phi(p_0) \Psi\\
&=& \tau(A) m \gamma_0 \Psi\\
&=& \tau(A) m\Psi \quad \textrm{since} \quad \Psi \in \fS_+\\
&=& m \tau(A)\Psi
\end{eqnarray*}
so that  $(Ap_0, \tau(A)\Psi) \in \fB_m$ as required.

Conversely, given $(p,\Psi) \in \fB_m$.   
Choose $A$ such that $p = Ap_0$ and set $\Phi = \tau(A)^{-1} \Psi$.    Then we have 
\begin{eqnarray*}
\phi(p)\Psi &=& m\Psi\\
\implies \tau(A) \phi(p_0) \tau(A)^{-1} \tau(A)\Phi  &=& m \tau(A)\Phi\\
\implies \tau(A) \phi(p_0) \Phi &=& m \tau(A)\Phi\\
\implies  \gamma_0 \Phi &=& \Phi\\
\implies \Phi &  \in & \fS_+
\end{eqnarray*}

Therefore, we have 
\[
\beta: (A,\Phi) \mapsto (Ap_0,\tau(A)\Phi) = (p,\Psi)
\]
as required.   

Now $A$ is determined mod $\SU_2$.  If we replace $A$ by $AT$ with $T \in \SU_2$ then we must replace $(A,\Phi)$ by
$(AT,y)$  with $y$ defined by 
\[
y = \tau(AT)^{-1}\Psi = \tau(T)^{-1} \tau(A)^{-1}\Psi = \tau(T)^{-1}\Phi \in \fS_+
\]
This leads to the pair 
\[
(AT, \tau(T)^{-1}\Phi)
\]
which is in the same equivalence class as $(A,\Phi)$.   Therefore, the inverse of the map $\beta$ is well-defined.
\end{proof}

\begin{cor}\label{dirac}
%Now $\fS_+$ and $S$ are isomorphic $\SU_2$-modules, so, in Theorem \ref{SSS},  we may replace $\fS_+$ by $S$. 
The Dirac bundle is isomorphic as an $\SL_2(\C)$-bundle to the associated vector bundles
\[
\SL_2(\C) \times_{\SU_2} S
\]
and
\[
\SL_2(\C) \times_{\SU_2} \overline{S}
\]

\end{cor}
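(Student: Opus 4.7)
The plan is to chain Theorem \ref{SSS} with an identification of $\SU_2$-representations. Theorem \ref{SSS} already exhibits $\fB_m \simeq \SL_2(\C) \times_{\SU_2} \fS_+$ as $\SL_2(\C)$-bundles, so the corollary reduces to producing $\SU_2$-equivariant isomorphisms $\fS_+ \cong S$ and $\fS_+ \cong \overline{S}$: the associated-bundle construction $F \mapsto \SL_2(\C) \times_{\SU_2} F$ is functorial in the fiber $F$, so any such isomorphism of fibers lifts automatically to an isomorphism of $\SL_2(\C)$-bundles over $X_m \simeq \SL_2(\C)/\SU_2$.

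For the first identification, the computation immediately preceding the corollary already shows that $\tau|_{\SU_2}$ acts on $\fS_+$ with character $e^{it}+e^{-it}$ on the diagonal torus, which is exactly the character of the spin $1/2$ representation of $\SU_2$ on $S$. Since $\SU_2$ is compact with a unique irreducible representation of each dimension, character theory yields $\fS_+ \simeq S$ as $\SU_2$-modules. An explicit intertwiner $S \to \fS_+$ is given by $e_1 \mapsto e_1 - \overline{e_2}$, $e_2 \mapsto e_2 + \overline{e_1}$, which matches the torus eigenvalues displayed in the text; its $\SU_2$-equivariance then follows from Schur's lemma applied to the two irreducible representations.

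For the second identification, it suffices to show $S \cong \overline{S}$ as $\SU_2$-representations and compose with $\fS_+ \cong S$. This is the classical self-conjugacy of the defining representation of $\SU_2$: the antilinear map $J : S \to S$ determined by $Je_1 = e_2$ and $Je_2 = -e_1$ commutes with every $A = \matje{a}{b}{-\bar b}{\bar a} \in \SU_2$, as a direct check on $e_1$ and $e_2$ shows. Viewed as a $\C$-linear map $\overline{S} \to S$, this $J$ is the required $\SU_2$-equivariant isomorphism.

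The main point to flag is that this last step genuinely uses compactness of the isotropy group: $S$ and $\overline{S}$ are \emph{not} isomorphic as $\SL_2(\C)$-modules, but become isomorphic upon restriction to $\SU_2$. This is precisely why the corollary is a statement about $\SL_2(\C)$-bundle structure over $X_m$ and cannot be upgraded to a fiberwise identification of $\SL_2(\C)$-representations on $S$ versus $\overline{S}$.
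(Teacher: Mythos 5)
Your proof is correct and takes essentially the same route as the paper, which simply invokes the fact that $\fS_+$, $S$, and $\overline{S}$ are isomorphic as $\SU_2$-modules; you supply the missing justification via the character computation for $\fS_+$ already in the text and the standard self-conjugacy of the defining representation of $\SU_2$. One small caveat: Schur's lemma by itself does not show that your explicit map $e_1\mapsto e_1-\overline{e_2}$, $e_2\mapsto e_2+\overline{e_1}$ is $\SU_2$-equivariant (matching torus weights only pins down an intertwiner up to independent scalars on the two weight lines), but this is harmless since the character-theoretic existence argument is all that the corollary actually requires.
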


\begin{proof}   There is a natural action of $\SL_2(\C)$ on $S$.  The action on the conjugate vector space is defined as
\[
A \cdot \overline{v} = \overline{Av}
\]
with $A \in \SL_2(\C)$ and $v \in S$.   Now use the fact that $\fS_+, S$ and $\overline{S}$ are isomorphic as $\SU_2$-modules.
\end{proof}

%\subsubsection{Link to representation theory}The following associated vector bundle is a variation on ours:
%\[\R^4 \rtimes \SL_2(\C) \times _{\R^4 \rtimes \SU_2} \fS_+ \]It is this associated vector bundle which  occurs in the projective unitary representation theory of the connected component 
%$\R^4 \rtimes \SO(3,1)$   of the inhomogeneous Lorentz group.   The two associated vector bundles are isomorphic: 
%\[\R^4 \rtimes \SL_2(\C) \times _{\R^4 \rtimes \SU_2} \fS_+ \simeq  \SL_2(\C) \times _{\SU_2} \fS_+\]
 
\subsection{Dirac equation}  The condition
\[
\phi(p)  \Psi = m\Psi
\]
with $\Psi \in \fS$ is, written in full,
\[
\left(p_0 \gamma_0 + p_1 \gamma_1 + p_2\gamma_2 + p_3\gamma_3 \right) \Psi = m\Psi
\]

In the slashed notation \cite[p.118]{F}, this equation appears as
\begin{eqnarray}\label{slashed}
\slashed{p} \Psi = m\Psi
\end{eqnarray}
This is the Dirac equation.  After Fourier transform of the variables $p_0, p_1, p_2, p_3$, the equation appears in the more familiar form
\[
\sum_{r=0}^3 i\gamma_r \frac{\partial}{\partial x_r} \Psi = m\Psi
\]
where    $\gamma_\mu : = \phi(v_\mu)$ with $\mu = 0,1,2,3,$ are $\gamma$-matrices satisfying the conditions
\begin{eqnarray*}
\gamma_0^2 &=& 1\\
 \gamma_1^2 & =&  \gamma_2^2  \: =  \: \gamma_3^2 \: = -1\\
  \gamma_\mu \gamma_\nu & =&  - \gamma_\nu \gamma_\mu
\end{eqnarray*}
for all $\mu \neq \nu$. 

\subsection{A pair of conjugate $2$-spinor fields}   Let $\Psi$ be a solution of the Dirac equation (\ref{slashed}), i.e. $\phi(p) \Psi = m\Psi$.   To mark the dependence on $p$, we will write
this equation as 
\[
\phi(p) \Psi_p = m \Psi_p
\]
Then $\{\Psi_p : p \in X_m\}$ is a section of the Dirac bundle $\fB_m$.   By Corollary \ref{dirac}, $\fB_m$ is isomorphic to the associated bundle 
\begin{eqnarray}\label{B1}
\SL_2(\C) \times_{\SU_2} S
\end{eqnarray}
The section $\{\Psi_p : p \in X_m\}$ therefore determines a section of (\ref{B1}), which is a $2$-spinor field over the manifold $X_m$.      Now $\fB_m$ is also isomorphic to the conjugate bundle
\begin{eqnarray}\label{B2}
\SL_2(\C) \times_{\SU_2} \overline{S}
\end{eqnarray}
and so the section $\{\Psi_p : p \in X_m\}$ determines a section of (\ref{B2}), which is the conjugate $2$-spinor field.   In this way, a solution of the Dirac equation determines a pair of conjugate
$2$-spinor fields over the manifold $X_m$.

\subsection{Comments}   It is worth noting that the Clifford module structure of $\fS$ survives up to and including the Dirac equation, for 
$\phi(p)$ belongs to the Clifford algebra $\mathcal{C} \ell (\P^4,Q)$.   

In the literature, in the mathematical accounts of the Dirac equation \cite[Eqn.(136)]{Var}, \cite[p.72]{Sim}, \cite[Eqn.1.3.52]{Jos} one finds the 
map
\[
A \mapsto A \oplus (A^*)^{-1}
\]
instead of our map 
\[
\tau(A) =  A \oplus \overline{A}
\]
From a conceptual point of view, the adjoint $A^*$ of $A \in \SL(S)$ is not defined, because $S$ is not equipped with a sesquilinear inner product.   The space $S$ is endowed only with the symplectic
form $(x,y) \mapsto \varepsilon(x,y)$.   

\section{The transition to spinor fields on space-time}   So far, we have dealt only with $2$-spinors.  In order to apply this theory to \emph{spinor fields} on space-time,  we proceed as follows.

Let $\fM$ be a smooth real $4$-manifold.   The first observation is that we can view  $S$ as a (smooth) complex $2$-plane  
bundle over the base space $\fM$.   The second observation is that the operations of conjugation, direct sum and tensor product  are as easy to apply to vector bundles as to vector spaces.
Then $\overline{S}$ is the conjugate of $S$, $\varepsilon$ is a non-degenerate symplectic form on $S$, $V$ is the real part of $S \otimes \overline{S}$,
and $g$ is the Lorentz metric with signature $+ - - -$ given by
\[
g = \varepsilon \otimes \overline{\varepsilon}.
\]

  Then $V$ is a real $4$-plane bundle on $\fM$, equipped with the Lorentz metric tensor.    At this point, we have to identify $V$ with the tangent bundle $T\fM$ of $\fM$:
\[
T\fM = V.
\]
In the terminology of \cite[p.211]{PR}, we must identify the world-vectors in $V$ with the tangent vectors to the  manifold $\fM$.   
In that case,  we can view $\fM$ as a model of curved space-time in general relativity.   The smooth sections of the vector bundle $S$ are called $2$-spinor fields; the smooth sections of 
$\fS$ are called $4$-spinor fields.  

The whole of \S2 carries over, \emph{at no extra cost}, to the theory of spinor fields on manifolds.   In this formulation, the curved space-time $\fM$ of general relativity is
 \emph{already equipped} with spinor fields.

\end{document}